\newif\ifmaintext
\newcommand{\preprinttitle}{Equivalence of Informations Characterizes Bregman Divergences}
\newcommand{\parstart}[2]{
    \ifmaintext 
        \IEEEPARstart{#1}{#2}%
    \else
        #1#2%
    \fi}
\newcommand{\R}{\mathbb{R}}
\newcommand{\mW}{\mathbf{W}}
\newcommand{\mI}{\mathbf{I}}
\newcommand{\mX}{\mathbf{X}}
\newcommand{\abs}[1]{\left| #1 \right|}
\newcommand{\E}{\mathbb{E}}
\newcommand{\cC}{\mathcal{C}}
\newcommand{\cH}{\mathcal{H}}
\newcommand{\cA}{\mathcal{A}}
\newcommand{\cB}{\mathcal{B}}
\newcommand{\vdelta}{\boldsymbol{\delta}}
\newcommand{\vx}{\mathbf{x}}
\newcommand{\vy}{\mathbf{y}}
\newcommand{\vv}{\mathbf{v}}
\newcommand{\vw}{\mathbf{w}}
\newcommand{\vp}{\mathbf{p}}
\newcommand{\vq}{\mathbf{q}}
\newcommand{\norm}[1]{\left\lVert #1 \right\rVert}
\newcommand{\simplex}[1]{\Delta_{#1}}
\newcommand{\paren}[1]{\left( #1 \right)}
\newcommand{\braces}[1]{\left\{ #1 \right\}}
\newcommand{\brackets}[1]{\left[ #1 \right]}
\newcommand{\sumi}{\sum_{i=1}^n}
\newcommand{\sumj}{\sum_{j=1}^\ell}
\theoremstyle{definition}
\newtheorem{definition}{Definition}
\theoremstyle{theorem}
\newtheorem{theorem}{Theorem}
\newtheorem{lemma}{Lemma}
\theoremstyle{plain}
\begin{document}

\title{\preprinttitle}
\date{\today}

\author{Philip S. Chodrow}
\thanks{Department of Computer Science, Middlebury College, Middlebury, Vermont, USA}

\begin{abstract}
    Bregman divergences are a class of distance-like comparison functions which play fundamental roles in optimization, statistics, and information theory. 
One important property of Bregman divergences is that they cause two useful formulations of information content (in the sense of variability or non-uniformity) in a weighted collection of vectors to agree. 
In this note, we show that this agreement in fact characterizes the class of Bregman divergences; they are the only divergences which generate this agreement for arbitrary collections of weighted vectors. 

\end{abstract}

\maketitle

\section{Introduction}

\parstart{F}{or} a convex set $\cC \subseteq \R^\ell$ with relative interior $\cC_*$ and a strictly convex function $\phi: \cC \to \R$ differentiable on $\cC_*$, the Bregman divergence induced by $\phi$ is the function $d_\phi: \cC \times \cC_* \to \R$ defined by
\begin{align}
    d_\phi(\vx_1,\vx_2) = \phi(\vx_1) - \phi(\vx_2) - \nabla \phi(\vx_2)^T(\vx_1-\vx_2)\;. \label{eq:bregman-divergence}
\end{align}
Two common examples of Bregman divergences are: 
\begin{itemize}
    \item The squared Mahalanobis distance $d_\phi(\vx_1, \vx_2) = (\vx_1 - \vx_2)^T \mW (\vx_1 - \vx_2)$, where $\mW$ is a positive-definite matrix. 
    The function $\phi$ corresponding to this case is given by $\phi(\vx) = \frac{1}{2}\vx^T\mW \vx$. 
    The special case $\mW = \mI$ gives the squared Euclidean distance.
    This divergence may be defined on $\cC = \R^\ell$.
    \item The Kullback-Leibler (KL) divergence $d_\phi(\vx, \vy) = \sumi x_{i} \log \frac{x_{i}}{y_i}$, where $\vp$ and $\vq$ are probability vectors. 
    The KL divergence is induced by the negative entropy function $\phi(\vp) = \sumi x_i \log x_i$.  
    The KL divergence is defined on the probability simplex $\simplex{\ell} = \braces{\vp \in \R^\ell \;|\; \sumi x_i = 1\;,\; x_i \geq 0 \; \forall i}$.
    Extensions are possible to general convex subsets of $\R_+^\ell$.
    When computing the KL divergence, we use the convention $0\log 0 = 0$. 
\end{itemize}

Bregman divergences \cite{bregmanRelaxationMethodFinding1967} play fundamental roles in information theory, statistics, and machine learning; see \cite{reemReexaminationBregmanFunctions2019} for a review.  
Like metrics, Bregman divergences are positive-definite: $d_\phi(\vx,\vy) \geq 0$ with equality if and only if $\vx = \vy$.
Unlike metrics, Bregman divergences are not in general symmetric and do not in general satisfy a triangle inequality. 
Bregman divergences are locally distance-like in that they induce a Riemannian metric on $\cC$ obtained by the small-$\vdelta$ expansion
\begin{align}
    d_\phi(\vx + \vdelta, \vx) = \frac{1}{2}\vdelta^T\cH\phi(\vx)\vdelta + o(\norm{\vdelta}^2)\;,
\end{align}
where $\vdelta$ is a small perturbation vector and $\cH\phi(\vx)$ is the Hessian of $\phi$ at $\vx$.
Because $\phi$ is strictly convex, $\cH\phi(\vx)$ is positive-definite and defines a Riemannian metric on $\cC$ \cite{amariInformationGeometryDivergence2010}. 

Bregman divergences provide one natural route through which to generalize Shannon information theory, with the function $-\phi$ taking on the role of the Shannon entropy. 
Although there are multiple characterization theorems for fundamental information-theoretic quantities such as entropy \cite{baezCharacterizationEntropyTerms2011,faddeev1956concept,shannonMathematicalTheoryCommunication1948}, mutual information \cite{fullwoodAxiomaticCharacterizationMutual2023,frankelMeasuringSchoolSegregation2011}, and the Kullback-Leibler divergence \cite{jiaoInformationMeasuresCurious2014,hobsonNewTheoremInformation1969}, the author is aware of only one extant characterization of the more general class of Bregman divergences \cite{banerjeeOptimalityConditionalExpectation2005}: Bregman divergences are the unique class of loss functions which render conditional expectations loss-minimizing in stochastic prediction problems. 

In this short note, we prove a new characterization of the class of Bregman divergences, based on an equality of two common formulations of information content in weighted collections of vectors. 

\section{Bregman Divergence and Two Informations}

In this section, we state two standard formulations of the concept of information contained in a data set and discuss the role of Bregman divergences in relating them. 
Let $\simplex{n}$ be the set of discrete probability measures on $n$ points. 
We treat each formulation of information as a function $\simplex{n} \times \cC^{n} \to \R$. 
The first standard formulation of information content consideres a weighted sum of strictly convex loss functions, which is compared to the same loss function evaluated at the data centroid.

\begin{definition}[Jensen Gap Information]
    Let $\phi: \cC \to \R$ be a strictly convex function on $\cC$.
    The \emph{Jensen gap information} is the function $I_\phi: \simplex{n} \times \cC^{\ell} \to \R$ with given by 
    \begin{align}
        I_\phi(\mu, \mX) \triangleq \sumi \mu_i \phi(\vx_i) - \phi \paren{\vy}\;,
    \end{align}
    where $\vy = \sumi \mu_i \vx_i$.  
\end{definition}

If we define $X$ to be a random vector that takes value $\vx_i$ with probability $\mu_i$, Jensen's inequality states that $\E[\phi(X)] \geq \phi(\E[X])$, with equality holding only if $X$ is constant (i.e. if there exists $i$ such that $\mu_i = 1$). 
The Jensen gap information is a measure of the difference of the two sides of this inequality; indeed, $\E[\phi(X)] = \phi(\E[X]) + I_\phi(\mu, X)$ \cite{banerjeeClusteringBregmanDivergences2004,banerjeeOptimalityConditionalExpectation2005}.
This formulation makes clear that $I_\phi$ is nonnegative and that $I_\phi(\mu, \mX) = 0$ if and only if $\mX$ is constant on the rows supported by $\mu$.

Another standard concept of information content involves a weighted mean of divergences from the centroid. 

\begin{definition}[Divergence]
    A function $d: \cC \times \cC \to \R$ is a \emph{divergence} if $d(\vx_1,\vx_2) \geq 0$ for any $\vx_1, \vx_2 \in \cC$,  with equality if and only if $\vx_1 = \vx_2$. 
\end{definition}

\begin{definition}[Divergence Information]
    Let $d$ be a divergence. 
    The \emph{divergence information} is the function $I_d: \simplex{n} \times \cC^{n} \to \R$ given by
    \begin{align}
        I_d(\mu, \mX) \triangleq \sumi \mu_i d(\vx_i, \vy)\;, \label{eq:divergence-information}
    \end{align}
    where $\vy = \sumi \mu_i \vx_i$. 
\end{definition}
In this definition, we assume that $\vy \in \cC_*$; as noted by \cite{banerjeeClusteringBregmanDivergences2004}, this assumption is not restrictive since the set $\cC$ can be replaced with the convex hull of the data $\mX$ without loss of generality.  
The divergence information measures the $\mu$-weighted average divergence of $\vx_i$ from the centroid $\vy$. 
The divergence information is related to the only extant characterization result for Bregman divergences known to the author: a divergence $d$ is a Bregman divergence if and only if the vector $\vy = \sumi {\mu_i}\vx_i$ is the unique minimizer of the righthand side of \cref{eq:divergence-information} for any choice of $\mu$ and $\mX$ \cite{banerjeeOptimalityConditionalExpectation2005}. 

There are several important cases in which the Jensen gap information and the divergence information coincide. 

\begin{definition}[Information Equivalence]
    We say that a pair $(\phi, d)$ of a strictly convex function $\phi: \cC \rightarrow \R$ and a divergence $d: \cC \times \cC \to \R$ satisfies the \emph{information equivalence property} if, for all $(\mu, \mX) \in \simplex{n} \times \cC^n$, it holds that 
    \begin{align}
        I_\phi(\mu, \mX) = I_d(\mu, \mX)\;. \label{eq:information-equivalence}
    \end{align}
\end{definition}

\begin{lemma}[Information Equivalence with Bregman Divergences\cite{banerjeeOptimalBregmanPrediction2004,banerjeeClusteringBregmanDivergences2004}] \label{lm:bregman-agreement}
    If $d = d_\phi$, then the pair $(\phi, d)$ satisfies the information equivalence property.
\end{lemma}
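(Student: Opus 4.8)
The plan is to prove the identity $I_\phi(\mu,\mX) = I_{d_\phi}(\mu,\mX)$ by direct substitution, expanding the Bregman divergence definition inside the divergence information and collecting terms. This is a purely algebraic computation with no real obstacle; the key is to exploit the fact that the centroid $\vy = \sumi \mu_i \vx_i$ makes the linear correction term telescope away.

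First I would substitute the definition of $d_\phi$ from \cref{eq:bregman-divergence} into the divergence information, writing
\begin{align}
    I_{d_\phi}(\mu, \mX) = \sumi \mu_i \brackets{\phi(\vx_i) - \phi(\vy) - \nabla\phi(\vy)^T(\vx_i - \vy)}\;.
\end{align}
I would then distribute the weighted sum across the three terms. The first term yields $\sumi \mu_i \phi(\vx_i)$, which is exactly the first term of the Jensen gap information $I_\phi$. The second term yields $-\phi(\vy)\sumi \mu_i = -\phi(\vy)$, since $\mu \in \simplex{n}$ means the weights sum to one; this matches the second term of $I_\phi$.

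The crux is the third, linear term. Here I would pull the constant vector $\nabla\phi(\vy)$ out of the sum and observe that
\begin{align}
    \sumi \mu_i (\vx_i - \vy) = \paren{\sumi \mu_i \vx_i} - \vy \sumi \mu_i = \vy - \vy = \vzero\;,
\end{align}
using both the definition of the centroid $\vy = \sumi \mu_i \vx_i$ and $\sumi \mu_i = 1$. Hence the entire linear correction term vanishes, leaving exactly $\sumi \mu_i \phi(\vx_i) - \phi(\vy) = I_\phi(\mu, \mX)$. The main (and only) subtlety worth flagging is the domain requirement: $\nabla\phi(\vy)$ must be well-defined, which holds because $\vy \in \cC_*$ by the standing assumption that the centroid lies in the relative interior, so $d_\phi(\cdot, \vy)$ is defined. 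This completes the proof, and since no assumption on $(\mu, \mX)$ beyond $\mu \in \simplex{n}$ was used, the equivalence holds for all arguments.
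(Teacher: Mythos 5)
Your proof is correct and is exactly the direct calculation the paper has in mind (it defers the details to the cited reference of Banerjee et al., where the same expansion appears): substituting the definition of $d_\phi$ and noting that $\sumi \mu_i (\vx_i - \vy) = 0$ kills the linear term. You also correctly flag the only delicate point, namely that $\vy \in \cC_*$ so that $\nabla\phi(\vy)$ is defined, which matches the paper's standing assumption on the centroid.
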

The proof of this lemma is a direct calculation and is provided in \cite{banerjeeClusteringBregmanDivergences2004}. 
When $\phi(\vx) = \frac{1}{2}\norm{\vx}^2$ and $d = d_\phi$ is the Euclidean distance, the information equivalence property \cref{eq:information-equivalence} is equivalent to identity 
\begin{align}
    \sumi \mu_i \norm{\vx_i}^2 - \norm{\sumi \mu_i \vx_i}^2 = \sumi \mu_i \norm{\vx_i - \sumi \mu_i \vx_i}^2\;. \label{eq:euclidean-agreement}
\end{align}
The righthand side of \cref{eq:euclidean-agreement} is the weighted sum-of-squares loss of the data points $\vx_i$ with respect to their centroid $\sumi \mu_i \vx_i$, which is often used in statistical tests and clustering algorithms. 
\Cref{eq:euclidean-agreement} asserts that this loss may also be computed from a weighted average of the norms of the data points. 

When $\cC$ is the probability simplex, $\phi(\vx) = \sumi x_i \log x_i$ is the negative entropy, and $d = d_\phi$ is the KL divergence, the information equivalence property \cref{eq:information-equivalence} expresses the equality of two equivalent formulations of the mutual information for discrete random variables. 
Let $A$ and $B$ be discrete random variables on alphabets $\cA$ of size $k$ and $\cB$ of size $\ell$ respectively. 
Suppose that their joint distribution is $p_{A,B}(a_i,b_j) = \mu_i x_{ij}$. 
Let $\vy$ be the vector with entries $y_j = \sumi \mu_i x_{ij}$; then $\vy$ is the marginal distribution of $B$.
The Jensen Gap information $I_\phi(\mu, \mX)$ is 
\begin{align}
    I_\phi(\mu, \mX) &= \underbrace{\sumi \mu_i \sumj x_{ij} \log x_{ij}}_{-H(B|A)} - \underbrace{\sumj y_j \log y_j}_{-H(B)};
\end{align}
which expresses the mutual information $I(A;B)$ between random variables $A$ and $B$ in the entropy-reduction formulation, $I(A;B) = H(B) - H(B|A)$ \cite{cover2012elements}.
On the other hand, the divergence information $I_d(\mu,\mX)$ is 
\begin{align}
    I_d(\mu, \mX) = \sumi\mu_i \underbrace{\sumj x_{ij} \log \frac{x_{ij}}{y_j}\;,}_{d_\phi(\vx_i, \vy)}
\end{align}
which expresses the mutual information $I(A;B)$ instead as the weighted sum of KL divergences of $\vx_{i}$ from $\vy$.

Our contribution in this paper is to prove a converse to \Cref{lm:bregman-agreement}: the Bregman divergence $d_\phi$ is the \emph{only} divergence which satisfies information equivalence with $\phi$.

\section{Main Result}

\begin{theorem} \label{thm:main}
    If $d$ is a divergence and if the pair $(\phi, d)$ satisfies the information equivalence property \cref{eq:information-equivalence}, then $d$ is the Bregman divergence induced by $\phi$: $d(\vx,\vy) = d_\phi(\vx,\vy)$ for any $\vx \in \cC$ and $\vy \in \cC_*$.
\end{theorem}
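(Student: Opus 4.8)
The plan is to reorganize the information-equivalence identity so that the unknown divergence $d$ appears as an ``average-affine'' condition, and then to show that this condition forces $d(\cdot,\vy)$ to differ from $\phi$ by an affine function of the correct slope. Fix $\vy \in \cC_*$ and define $h_\vy(\vx) \triangleq \phi(\vx) - d(\vx,\vy) - \phi(\vy)$. Since $d(\vy,\vy)=0$ and $\sumi \mu_i = 1$, the identity \cref{eq:information-equivalence} is equivalent to the statement that
\begin{align}
    \sumi \mu_i h_\vy(\vx_i) = 0 \qquad \text{whenever } \sumi \mu_i \vx_i = \vy. \label{eq:avg-affine}
\end{align}
In words, $h_\vy$ averages to zero over every weighted collection whose centroid is $\vy$. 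The goal is to deduce from \cref{eq:avg-affine} that $h_\vy(\vx) = \vw_\vy^T(\vx-\vy)$ for some vector $\vw_\vy$, and then to identify $\vw_\vy = \nabla\phi(\vy)$; this is exactly $d(\vx,\vy) = d_\phi(\vx,\vy)$.

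First I would establish that $h_\vy$ is linear along every line through $\vy$. Given a direction $\vu$, any two points $\vy + a\vu$ and $\vy - b\vu$ with $a,b>0$ admit a convex combination with centroid $\vy$ (weights proportional to $b$ and $a$), so \cref{eq:avg-affine} gives $b\,h_\vy(\vy+a\vu) + a\,h_\vy(\vy-b\vu) = 0$, i.e. $h_\vy(\vy+a\vu)/a = h_\vy(\vy-b\vu)/(-b)$. Because $\vy \in \cC_*$, at least one point on each side of $\vy$ lies in $\cC$, and the displayed equality shows the quantity $h_\vy(\vy+c\vu)/c$ is independent of the admissible scalar $c$; writing $\lambda_{\vu}$ for this common value yields $h_\vy(\vy+c\vu) = c\,\lambda_\vu$ and in particular $\lambda_{c\vu} = c\,\lambda_\vu$.

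Next I would upgrade homogeneity to full linearity using a three-point collection. For small $\vu,\vv$ the points $\vy+\vu$, $\vy+\vv$, $\vy-\vu-\vv$ all lie in $\cC$ and sum to $3\vy$, so the uniform-weight version of \cref{eq:avg-affine} gives $h_\vy(\vy+\vu)+h_\vy(\vy+\vv)+h_\vy(\vy-\vu-\vv)=0$. Substituting the values from the previous step turns this into $\lambda_\vu + \lambda_\vv - \lambda_{\vu+\vv} = 0$. Together with homogeneity, this additivity makes $\vu \mapsto \lambda_\vu$ a linear functional on the subspace spanned by $\cC - \vy$, so $\lambda_\vu = \vw_\vy^T\vu$ for some $\vw_\vy$, and hence $d(\vx,\vy) = \phi(\vx)-\phi(\vy)-\vw_\vy^T(\vx-\vy)$ for all $\vx \in \cC$.

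Finally, to pin down the slope I would invoke the defining property of a divergence. Nonnegativity $d(\vx,\vy)\ge 0$ reads $\phi(\vx) \ge \phi(\vy) + \vw_\vy^T(\vx-\vy)$ for all $\vx \in \cC$, so $\vw_\vy$ is a subgradient of $\phi$ at $\vy$; since $\phi$ is differentiable at the relative-interior point $\vy$, the subgradient is unique and equals $\nabla\phi(\vy)$, giving $d = d_\phi$. I expect the main obstacle to be the linearity extraction in the two middle steps---in particular, verifying that the finite collections needed to drive the homogeneity and additivity arguments can be chosen inside $\cC$ (which is where $\vy \in \cC_*$ is essential) and handling the fact that slopes are only determined modulo the normal space of the affine hull of $\cC$, exactly the ambiguity already present in $\nabla\phi(\vy)$.
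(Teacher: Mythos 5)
Your proof is correct, and its skeleton coincides with the paper's: you write $d(\vx,\vy)=\phi(\vx)-\phi(\vy)-h_\vy(\vx)$ (your $h_\vy$ is, up to sign, the paper's $f(\cdot,\vy)$ in \cref{eq:introducing-f}), you reduce information equivalence to the condition that $h_\vy$ averages to zero over every weighted configuration with centroid $\vy$ (the paper's \cref{eq:mean-f-is-0}), you prove linearity of the correction term, and you then pin down its slope. The differences lie in how the two main steps are executed. For linearity, the paper runs a chain of four lemmas (\Cref{lm:tech-3,lm:almost-linear,lm:tech-4,lm:tech-2}: oddness, invariance under convex combinations, homogeneity, full linearity), while you use just two concrete configurations: an opposed pair $\vy+a\vv,\ \vy-b\vv$ to get homogeneity along lines, and a uniform three-point configuration to get additivity. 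This is the same underlying idea --- exploiting finite collections whose centroid is $\vy$ --- but your version is more economical, and it makes explicit the domain bookkeeping (that the needed configurations lie in $\cC$ precisely because $\vy\in\cC_*$) which the paper leaves implicit. The genuinely different step is the identification of the slope: the paper differentiates $d(\cdot,\vy)$ in its first argument, uses positive-definiteness to conclude that $\vy$ is a critical point, and imposes orthogonality of $\nabla_1 d(\vy,\vy)$ to the feasible directions (\cref{eq:grad-orthogonal}); you instead read the inequality $d\geq 0$ as saying that $\vw_\vy$ is a subgradient of $\phi$ at $\vy$, and invoke uniqueness of the subgradient (modulo the normal space of the affine hull) at a point of differentiability in the relative interior. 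Your route is purely convex-analytic, needs only nonnegativity of $d$ rather than a first-order optimality computation, and handles the affine-hull ambiguity cleanly; the paper's route is a shorter direct calculation. Both arguments are sound.
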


    For any $\vx \in \cC$ and $\vy \in \mathrm{int}\;\cC$, we can write
    \begin{align}
        d(\vx,\vy) = \phi(\vx) - \phi(\vy) + f(\vx,\vy) \label{eq:introducing-f}
    \end{align}
    for some unknown function $f:\cC \times \cC_*\rightarrow \R$.

    Our first step is to show that the condition \cref{eq:information-equivalence} implies that $f$ is an affine function of its first argument $\vx$ on $\cC$.
    To do so, we observe that if $\mu \in \simplex{n}$ and $\mX \in \cC^n$ are such that $\sumi \mu_i \vx_i = \vy$, then information equivalence \eqref{eq:information-equivalence} enforces that $I_\phi(\mu,\mX) = I_d(\mu,\mX)$ for all $\mu \in \simplex{n}$. 
    This means that we must have 
    \begin{align*}
        \sumi\mu_i \phi(\vx_i) - \phi(\vy) &= \sumi\mu_i d(\vx_i,\vy) \\ 
                                          &= \sumi\mu_i\brackets{\phi(\vx_i) - \phi(\vy) + f(\vx_i,\vy)} \\ 
                                          &= \sumi\mu_i\phi(\vx_i) - \phi(\vy) + \sumi\mu_i f(\vx_i,\vy) \;,
    \end{align*}
    from which it follows that 
    \begin{align}
        \sumi\mu_i f(\vx_i,\vy) = 0 \label{eq:mean-f-is-0}\;.
    \end{align}

    Consider the function $g_\vy(\vv) = f(\vv + \vy, \vy)$. 
    The condition \cref{eq:mean-f-is-0} implies that 
    \begin{align}
        \sumi \mu_i g_\vy(\vv_i) = 0 \label{eq:mean-g-is-0}\;.
    \end{align}
    for any $\vv_1,\ldots,\vv_k$ such that $\sumi \mu_i \vv_i = 0$.

    To show that $f$ is affine, it suffices to show that the function $g_\vy$ is linear. 
    We do this through a sequence of short lemmas. 

    \begin{lemma}\label{lm:tech-3}
        The function $g_\vy$ satisfies $g_\vy(-\vv) = -g_\vy(\vv)$. 
    \end{lemma}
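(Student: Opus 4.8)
The plan is to apply the zero-mean condition \eqref{eq:mean-g-is-0} to the simplest nontrivial configuration: an antipodal pair of points carried by the uniform two-point distribution. Concretely, I would take $n = 2$, set the weights to $\mu_1 = \mu_2 = \tfrac{1}{2}$, and choose the displacements $\vv_1 = \vv$ and $\vv_2 = -\vv$. This configuration is balanced in the sense required to invoke \eqref{eq:mean-g-is-0}, since $\sumi \mu_i \vv_i = \tfrac{1}{2}\vv - \tfrac{1}{2}\vv = 0$.

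Feeding this choice into \eqref{eq:mean-g-is-0} gives $\tfrac{1}{2} g_\vy(\vv) + \tfrac{1}{2} g_\vy(-\vv) = 0$, and clearing the factor of $\tfrac{1}{2}$ yields $g_\vy(\vv) + g_\vy(-\vv) = 0$. Rearranging produces exactly the claimed odd symmetry $g_\vy(-\vv) = -g_\vy(\vv)$. The algebraic content is therefore immediate once the two-point antipodal configuration is in place; no further computation is needed.

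The one point requiring care is the domain of $g_\vy$. Since $g_\vy(\vv) = f(\vv + \vy, \vy)$ is defined only when $\vv + \vy \in \cC$, evaluating both $g_\vy(\vv)$ and $g_\vy(-\vv)$ requires that $\vv + \vy$ \emph{and} its reflection $\vy - \vv$ both lie in $\cC$. This is where the hypothesis $\vy \in \cC_*$ is essential: because $\cC_*$ is the relative interior, there is a relatively open neighborhood of $\vy$ contained in $\cC$, so for every sufficiently small admissible $\vv$ the reflected point $\vy - \vv$ also lies in $\cC$, and the two-point construction is legitimate. I expect this domain bookkeeping—verifying that the antipodal point stays in $\cC$—to be the only genuine obstacle, and it is a mild one; the rest of the argument is a one-line specialization of the balance identity already established.
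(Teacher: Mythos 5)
Your proof is correct and is exactly the paper's argument: apply the balance identity \cref{eq:mean-g-is-0} to the two-point antipodal configuration with weights $\tfrac{1}{2}, \tfrac{1}{2}$ and vectors $\vv, -\vv$. Your additional remark on the domain of $g_\vy$ is sound bookkeeping that the paper leaves implicit, but it does not change the substance of the argument.
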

    \begin{proof}
        By \cref{eq:mean-g-is-0}, we have that 
        \begin{align*}
            \frac{1}{2}g_\vy(\vv) + \frac{1}{2}g_\vy(-\vv) = 0\;,
        \end{align*}
        from which the lemma follows. 
    \end{proof}

    \begin{lemma} \label{lm:almost-linear}
        Let $\tilde{\gamma}_1,\ldots \tilde{\gamma}_k$ be a set of nonnegative scalars such that $\sumi \tilde{\gamma}_i = 1$. 
        Then, for any vectors $\vv_1,\ldots,\vv_k$, it holds that 
        \begin{align}
            g_\vy\paren{\sumi \tilde{\gamma}_i \vv_i} = \sumi \tilde{\gamma}_i g_\vy(\vv_i)\;.
        \end{align}
    \end{lemma}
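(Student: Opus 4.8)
The plan is to obtain the convex-combination identity by applying the mean-zero relation \cref{eq:mean-g-is-0} to the collection $\vv_1,\ldots,\vv_k$ after augmenting it with a single balancing point. Writing $\vw = \sumi \tilde{\gamma}_i \vv_i$ for the argument in question, the key observation is that the combination $\sumi \tilde{\gamma}_i \vv_i$ can be cancelled against $-\vw$: if I halve the original weights and place the remaining mass $\tfrac12$ on $-\vw$, I obtain a probability vector whose induced mean of the vectors is exactly zero. This doubling construction is the one nonroutine idea; once it is set up, the statement follows by reading off \cref{eq:mean-g-is-0} and invoking the oddness established in \Cref{lm:tech-3}.

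Concretely, I would introduce the extra vector $\vv_0 = -\vw$ together with the weights $\mu_0 = \tfrac12$ and $\mu_i = \tfrac12 \tilde{\gamma}_i$ for the original points. These are nonnegative and sum to $\tfrac12 + \tfrac12 \sumi \tilde{\gamma}_i = 1$, and their induced mean vanishes:
\begin{align*}
    \mu_0 \vv_0 + \sumi \mu_i \vv_i = -\tfrac12 \vw + \tfrac12 \sumi \tilde{\gamma}_i \vv_i = 0\;.
\end{align*}
Applying \cref{eq:mean-g-is-0} to this augmented collection then gives $\tfrac12 g_\vy(-\vw) + \tfrac12 \sumi \tilde{\gamma}_i g_\vy(\vv_i) = 0$. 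Substituting $g_\vy(-\vw) = -g_\vy(\vw)$ from \Cref{lm:tech-3} and clearing the factor of $\tfrac12$ leaves $g_\vy(\vw) = \sumi \tilde{\gamma}_i g_\vy(\vv_i)$, which is precisely the claim.

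The step I expect to be the main obstacle is not the algebra but the domain bookkeeping, since $g_\vy(\vv) = f(\vv + \vy, \vy)$ is only defined when $\vv + \vy \in \cC$. The point $\vw$ causes no trouble: because $\vy \in \cC_*$ and each $\vv_i + \vy \in \cC$, convexity gives $\vw + \vy = \sumi \tilde{\gamma}_i (\vv_i + \vy) \in \cC$, so $g_\vy(\vw)$ is well defined. The genuine requirement is that the reflected point $-\vw + \vy$ also lie in $\cC$, which is exactly the type of reflection already needed in \Cref{lm:tech-3}. I would discharge it using the same interiority of $\vy$ that underlies that lemma, restricting if necessary to a neighborhood of $\vy$ on which the relevant reflections remain in $\cC$; this suffices, since the subsequent linearity argument only needs the identity locally before extending it.
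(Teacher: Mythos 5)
Your proof is correct and is essentially the paper's own argument: the paper sets $c_i = -\tfrac{1}{2}\tilde{\gamma}_i$ and $\vw = -\sumi \tilde{\gamma}_i \vv_i$, which is exactly your construction of halving the weights and adjoining the balancing point $-\sumi \tilde{\gamma}_i \vv_i$ with mass $\tfrac{1}{2}$, followed by the same appeal to \Cref{lm:tech-3} to remove the reflection. Your closing remark on the domain bookkeeping for $g_\vy$ addresses a point the paper passes over silently, but it does not alter the substance of the argument.
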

    \begin{proof}
    
        Let $c_i = -\frac{1}{2}\tilde{\gamma}_i$. 
        Let $\vw = -\sumi\tilde{\gamma}_i \vv_i$. 
        By construction, we have $\frac{1}{2} - \sumi c_i = 1$ and $\frac{1}{2}\vw  - \sumi c_i \vv_i  = 0$. 
        Noting that $c_i < 0$ and using \cref{eq:mean-g-is-0} , we infer 
        \begin{align}
            \frac{1}{2} g_\vy(\vw) - \sumi c_i g_\vy(\vv_i) = 0\;. \label{eq:intermediate-1}
        \end{align}
        We then have 
        \begin{align}
            g_\vy\paren{-\sumi \tilde{\gamma}_i \vv_i} &= -g_\vy\paren{\sumi \tilde{\gamma}_i \vv_i} \tag{\Cref{lm:tech-3}}\\
            &= -2\sumi c_i g_\vy(\vv_i) \tag{by \cref{eq:intermediate-1}}\\
            &= \sumi \tilde{\gamma}_i g_\vy(\vv_i)\;, \nonumber
        \end{align}
        as was to be shown. 
    \end{proof}
    
    \begin{lemma}\label{lm:tech-4}
        For any vector $v$ and scalar $c$, we have $g_\vy(cv) = cg_\vy(v)$.
    \end{lemma}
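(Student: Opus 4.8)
The plan is to bootstrap from the two properties already established for $g_\vy$: oddness (\Cref{lm:tech-3}) and the convex-combination identity (\Cref{lm:almost-linear}). Together these nearly force full homogeneity; the only genuine subtlety is that \Cref{lm:almost-linear} applies only to \emph{convex} weights, so I would handle scalars $c \in [0,1]$, then $c > 1$, then $c < 0$ in turn.

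First I would record that $g_\vy(\mathbf{0}) = 0$, which is immediate from \Cref{lm:tech-3} evaluated at $\vv = \mathbf{0}$, giving $g_\vy(\mathbf{0}) = -g_\vy(\mathbf{0})$. Next, for a scalar $c \in [0,1]$ I would apply \Cref{lm:almost-linear} to the two-point convex combination with weights $\tilde{\gamma}_1 = c$, $\tilde{\gamma}_2 = 1-c$ and vectors $\vv_1 = \vv$, $\vv_2 = \mathbf{0}$. Since $c\vv + (1-c)\mathbf{0} = c\vv$, this yields $g_\vy(c\vv) = c\, g_\vy(\vv) + (1-c)\, g_\vy(\mathbf{0}) = c\, g_\vy(\vv)$, establishing homogeneity for all $c \in [0,1]$.

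To reach $c > 1$ I would invert this step: applying the $[0,1]$ case to the vector $c\vv$ with scalar $1/c \in (0,1)$ gives $g_\vy(\vv) = g_\vy\paren{\tfrac{1}{c} \cdot c\vv} = \tfrac{1}{c} g_\vy(c\vv)$, hence $g_\vy(c\vv) = c\, g_\vy(\vv)$. This covers every $c \geq 0$. Finally, for $c < 0$ I would write $c = -\abs{c}$ and combine oddness with the nonnegative case: $g_\vy(c\vv) = g_\vy(-\abs{c}\vv) = -g_\vy(\abs{c}\vv) = -\abs{c}\, g_\vy(\vv) = c\, g_\vy(\vv)$, as desired.

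I do not anticipate a real obstacle here; the mild difficulty is purely that positive scalars exceeding one lie outside the scope of \Cref{lm:almost-linear} and must be reached by the rescaling/inversion trick, after which the negative case follows at once from oddness. Combined with additivity—which follows by applying \Cref{lm:almost-linear} with equal weights $\tilde{\gamma}_1 = \tilde{\gamma}_2 = \tfrac{1}{2}$ and using the homogeneity just proved—this lemma shows that $g_\vy$ is linear, and hence that $f$ is affine in its first argument, which is the goal driving this sequence of lemmas.
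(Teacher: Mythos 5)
Your proof is correct, but it follows a genuinely different route from the paper's. The paper does not invoke \Cref{lm:almost-linear} in this step at all: for $c > 0$ it returns to the raw zero-mean condition \cref{eq:mean-g-is-0}, taking the two-point configuration with weights $\mu_1 = \frac{c}{1+c}$, $\mu_2 = \frac{1}{1+c}$ and vectors $\vv$ and $\vw = -c\vv$, whose weighted mean is zero; the resulting identity $\mu_1 g_\vy(\vv) + \mu_2 g_\vy(\vw) = 0$ gives $g_\vy(-c\vv) = -c\, g_\vy(\vv)$, and \Cref{lm:tech-3} then flips the sign to yield the claim, with the case $c < 0$ reduced to the positive one. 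You instead use \Cref{lm:almost-linear} as the engine: padding with the zero vector, $g_\vy(c\vv) = g_\vy\paren{c\vv + (1-c)\cdot 0} = c\, g_\vy(\vv)$, settles $c \in [0,1]$; the inversion $g_\vy(\vv) = g_\vy\paren{\tfrac{1}{c}(c\vv)} = \tfrac{1}{c}\, g_\vy(c\vv)$ settles $c > 1$; and oddness settles $c < 0$. Both arguments are elementary and rest on the same underlying facts, but they package them differently: the paper's single two-point construction dispatches every $c > 0$ at once, while your version buys reuse of the already-established convex-combination identity---arguably the more natural dependency, since \Cref{lm:almost-linear} immediately precedes this lemma in the development---at the modest cost of splitting the nonnegative scalars into $[0,1]$ and $(1,\infty)$. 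Your closing remark about additivity (equal weights plus the homogeneity just proved) is also sound and anticipates exactly how \Cref{lm:tech-2} completes the linearity argument.
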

    \begin{proof}
        We proceed by cases.
        \begin{enumerate}
            \item \textbf{$c = 0$}. \Cref{lm:tech-3} implies that $g_\vy(0) = 0$.
            \item \textbf{$c > 0$}. Define $\mu_1 = \frac{c}{1+c}$ and $\mu_2 = \frac{1}{1 + c}$. 
            Let $\vw =  - c\vv$. 
            Then, by construction, we have $\mu_1 + \mu_2 = 1$ and $\mu_1\vv + \mu_2\vw = 0$.  
            It follows that $\mu_1 g(\vv) + \mu_2 g(\vw) = 0$,
            Isolating $g(\vw)$, we have 
            \begin{align*}
                g(\vw) = -\frac{\mu_1}{\mu_2}g(\vv) = -cg(\vv)\;.
            \end{align*}
            Recalling that $\vw = -c\vv$ and applying \Cref{lm:tech-3} completes the proof of this case. 
            \item \textbf{$c < 0$}. This case follows by applying the proof of the previous case, replacing $c$ with $-c$.
        \end{enumerate} 
    \end{proof}

    \begin{lemma} \label{lm:tech-2}
        The function $g$ is linear.
    \end{lemma}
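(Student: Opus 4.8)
The plan is to establish linearity by combining the scalar homogeneity already guaranteed by \Cref{lm:tech-4} with an additivity property that I will extract from \Cref{lm:almost-linear}. Since a function is linear exactly when it is both additive and homogeneous, and homogeneity is precisely the content of \Cref{lm:tech-4}, the only thing left to verify is additivity: $g_\vy(\vv_1 + \vv_2) = g_\vy(\vv_1) + g_\vy(\vv_2)$ for arbitrary vectors $\vv_1,\vv_2$.

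First I would instantiate \Cref{lm:almost-linear} at the balanced weights $\tilde{\gamma}_1 = \tilde{\gamma}_2 = \tfrac{1}{2}$, which (since these are nonnegative and sum to $1$) yields
\[
    g_\vy\paren{\tfrac{1}{2}\vv_1 + \tfrac{1}{2}\vv_2} = \tfrac{1}{2} g_\vy(\vv_1) + \tfrac{1}{2} g_\vy(\vv_2)\;.
\]
Next I would rewrite the left-hand side using homogeneity: applying \Cref{lm:tech-4} with $c = \tfrac{1}{2}$ and the vector $\vv_1 + \vv_2$ gives $g_\vy\paren{\tfrac{1}{2}(\vv_1+\vv_2)} = \tfrac{1}{2} g_\vy(\vv_1 + \vv_2)$. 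Substituting this into the displayed identity and multiplying through by $2$ produces
\[
    g_\vy(\vv_1 + \vv_2) = g_\vy(\vv_1) + g_\vy(\vv_2)\;,
\]
which is exactly the desired additivity.

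Combining this additivity with the homogeneity of \Cref{lm:tech-4} shows that $g_\vy$ is linear, which completes the proof. I do not anticipate any genuine obstacle at this stage: the substantive work has already been carried out in the preceding lemmas, where oddness, invariance under convex combinations, and full scalar homogeneity were deduced from the single moment constraint \cref{eq:mean-g-is-0}. What remains here is only a short algebraic step — trading the balanced convex combination for a sum by clearing the factor of $\tfrac{1}{2}$ via homogeneity — so the argument amounts to assembling results already in hand.
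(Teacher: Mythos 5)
Your proof is correct, but it assembles the ingredients differently from the paper. The paper's own proof never isolates additivity as a separate step: given arbitrary real coefficients $\gamma_1,\ldots,\gamma_k$, it renormalizes them in one shot, setting $\tilde{\gamma}_i = \abs{\gamma_i}/\sum_i \abs{\gamma_i}$ and absorbing signs and magnitudes into rescaled vectors $\tilde{\vv}_i = s_i \vv_i$ with $s_i = \gamma_i/\tilde{\gamma}_i$, so that \Cref{lm:almost-linear} applies to the convex combination $\sum_i \tilde{\gamma}_i \tilde{\vv}_i = \sum_i \gamma_i \vv_i$ and \Cref{lm:tech-4} then pulls the scalars back out, giving $g_\vy\paren{\sum_i \gamma_i \vv_i} = \sum_i \gamma_i g_\vy(\vv_i)$ for all $k$ vectors at once. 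You instead take the textbook route: extract two-vector additivity from \Cref{lm:almost-linear} at the balanced weights $\tfrac{1}{2},\tfrac{1}{2}$, clear the factor of $\tfrac{1}{2}$ via the homogeneity of \Cref{lm:tech-4}, and invoke the standard fact that additivity plus homogeneity is linearity. Your version is arguably cleaner, since it avoids the paper's bookkeeping with absolute values, the sign ratios $s_i$, and the zero-coefficient case. The only thing it leaves implicit is the (trivial, inductive) passage from pairwise additivity to arbitrary finite linear combinations; this costs nothing, and indeed for the downstream use in the proof of \Cref{thm:main} pairwise additivity and homogeneity already suffice to write $g_\vy(\vv) = h_1(\vy)^T\vv$, so your argument fully serves the paper's purposes.
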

    
    \begin{proof}

        Fix arbitrary coefficients $\gamma_1,\ldots,\gamma_k$ and vectors $\vv_1,\ldots,\vv_k$. 
        Let 
        \begin{align*}
            \tilde{\gamma}_i =  \frac{\abs{\gamma_i}}{\sumi \abs{\gamma_i}} \quad \text{,} \quad s_i =\begin{cases}
                \frac{\gamma_i}{\tilde{\gamma}_i} &\quad \gamma_i \neq 0\\
                0 &\quad \gamma_i = 0\;,
            \end{cases}
            \quad \text{and} \quad 
            \tilde{\vv}_i = s_i \vv_i\;.
        \end{align*}
        Then, by construction, it holds that $\tilde{\gamma}_i \geq 0$ for all $i$,  $\sumi \tilde{\gamma}_i = 1$, and 
        \begin{align*}
            \sumi \tilde{\gamma}_i \tilde{\vv}_i = \sumi \gamma_i \vv_i\;.
        \end{align*}
        The conditions of \Cref{lm:almost-linear} are satisfied, and we have
        \begin{align}
            g_\vy\paren{\sumi \gamma_i \vv_i} &= g_\vy\paren{\sumi \tilde{\gamma}_i \tilde{\vv}_i} \nonumber\\
                                           &= \sumi \tilde{\gamma}_i g_\vy(\tilde{\vv}_i) \nonumber \tag{\Cref{lm:almost-linear}}\\ 
                                           &= \sumi \tilde{\gamma}_i s_i g_\vy(\vv_i) \tag{\Cref{lm:tech-4}} \nonumber \\ 
                                           &= \sumi \gamma_i g_\vy(\vv_i) \nonumber \;. 
        \end{align}
        This completes the proof. 
    \end{proof}

\begin{proof}[Proof of \Cref{thm:main}]
    The preceding lemmas prove that $g_\vy$ is linear. 
    Since for constant $\vy$ the function $f$ in \cref{eq:introducing-f} is a translation of $g_\vy$ in its first argument, it follows that $f$ is affine as a function of its first argument. 
    We may therefore write 
    \begin{align}
        f(\vx,\vy) = h_1(\vy)^T\vx + h_2(\vy)\;.  \label{eq:linear-f}
    \end{align}
    for some functions $h_1:\cC_* \rightarrow \R^\ell$ and $h_2:\cC_*\rightarrow \R$. 
    
    We now determine these functions. 
    First, since $\phi$ is differentiable on $\cC*$ and $f(\vx,\vy)$ is affine in $\vx$, $d(\vx,\vy)$ is differentiable in its first argument on $\cC_*$. 
    Since $d$ is a divergence, it is positive-definite and therefore $\vy$ is a critical point of the function $d(\cdot,\vy)$ on $\cC_*$.
    It follows that $\nabla_1 d(\vy,\vy)$, the gradient of $d$ with respect to its first argument, is orthogonal to $\cC_*$ at $\vy$: 
    \begin{align}
        \nabla_1 d(\vy, \vy)^T(\vx - \vy) = 0 \label{eq:grad-orthogonal}
    \end{align}
    for any $\vx \in \cC$. 
    We can compute $\nabla_1 d(\vy,\vy)$ explicitly; it is $\nabla_1 d(\vy,\vy) = \nabla \phi(\vy) + h_1(\vy)$. 
    \Cref{eq:grad-orthogonal} becomes 
    \begin{align}
        (\nabla \phi(\vy) + h_1(\vy))^T (\vx - \vy) = 0\;. \label{eq:linear-grad-phi}
    \end{align}
    for any $\vx$ and $\vy$. 

    Now, the condition that $d(\vy,\vy) = 0$ implies that $h_2(\vy) = -h_1(\vy)^T\vy$.
    We then compute 
    \begin{align}
        -\nabla \phi(\vy)^T(\vx - \vy) &= h_1(\vy)^T(\vx - \vy)  \tag{\cref{eq:linear-grad-phi}} \\ 
                                 &= h_1(\vy)^T\vx + h_2(\vy)  \nonumber \\ 
                                 &= f(\vx,\vy)\;. \tag{\cref{eq:linear-f}}
    \end{align}
    Recalling the definition of $f$ in \cref{eq:introducing-f}, we conclude that 
    \begin{align*}
        d(\vx,\vy) = \phi(\vx) - \phi(\vy) - \nabla \phi(\vy)^T(\vx - \vy)\;,
    \end{align*}
    which is the Bregman divergence induced by $\phi$.
    This completes the proof. 
\end{proof}

\section{Discussion}

    We have shown that the class of Bregman divergences is the unique class of divergences which induce agreement between the Jensen gap and divergence informations.
    This result offers some further perspective on the role for Bregman divergences in data clustering and quantization \cite{banerjeeClusteringBregmanDivergences2004}. 
    The Jensen gap information $I_\phi$ is a natural loss function for such tasks, with one motivation as follows. 
    Suppose that we wish to measure the complexity of a set of data points $\mX$ with weights $\mu \in \simplex{n}$ using a weighted per-observation loss function and a term which depends only on the centroid $\vy = \sumi \mu_i \vx_i$ of the data: 
    \begin{align*}
        L(\mu, \mX) = \sumi \mu_i \psi(\vx_i) + \rho(\vy)\;. 
    \end{align*}
    A natural stipulation for the loss function $L$ is that replacing two data points $\vx_1$ and $\vx_2$ with their weighted mean $\vx = \frac{\mu_1}{\mu_1 + \mu_2} \vx_1 + \frac{\mu_2}{\mu_1 + \mu_2}\vx_2$ should strictly decrease the loss when $\vx_1 \neq \vx_2$; this requirement is equivalent to strict convexity of the function $\psi$. 
    If we further require that $L(\mu, \mX) = 0$ when each row of $\mX$ is identical, we find that $\rho(\vy) = -\psi(\vy)$ and that our loss function is the Jensen gap information: $L(\mu,\mX) = I_\psi(\mu,\mX)$. 
    The result of this paper shows that this natural formulation fully determines the choice of how to perform pairwise comparisons between individual data points; only the corresponding Bregman divergence can serve as a comparator which is consistent with the Jensen gap information.

\bibliographystyle{plain}
\bibliography{bregman.bib}

\vspace{11pt}

\end{document}